\newcommand\restartchapters{\par
  \setcounter{chapter}{0}%
  \setcounter{section}{0}%
  \gdef\@chapapp{\chaptername}%
  \gdef\thechapter{\@arabic\c@chapter}}
\newcommand{\Pro}{{\mathtt{Prob}}}
\newtheorem{lemma}{\it \underline{Lemma}}
\newcommand{\F}{{\mathrm{F}}}
\newcommand{\tr}{{\mathrm{Tr}}}
\newcommand{\maxi}{{\mathtt{maximize}}}
\newcommand{\mini}{{\mathtt{minimize}}}
\newcommand{\st}{{\mathtt{s.t.}}}
\renewcommand{\algorithmicrequire}{\textbf{Input:}}
\newcommand{\ds}{\displaystyle}
\g@addto@macro\normalsize{%
 \setlength\abovedisplayskip{2pt}
 \setlength\belowdisplayskip{2pt}
 \setlength\abovedisplayshortskip{2pt}
 \setlength\belowdisplayshortskip{2pt}
}
\begin{document}
\bstctlcite{IEEEexample:BSTcontrol}
\title{On the Design of Secure Full-Duplex Multiuser Systems under User Grouping Method}
\author{
\IEEEauthorblockN{Van-Dinh Nguyen$^{\dag}$, Hieu V. Nguyen$^{\dag}$,   Octavia A. Dobre$^{\ddag}$, and Oh-Soon Shin$^{\dag}$} \\
\IEEEauthorblockA{$^{\dag}$School of Electronic Engineering $\&$ Department of ICMC Convergence Technology, Soongsil University, Korea \\
                  	$^{\ddag}$Faculty of Engineering and Applied Science, Memorial University, St. John's, NL, Canada\\
										\vspace{-0.9cm}
				}
}
\maketitle
\thispagestyle{empty} 
\pagestyle{empty}
\begin{abstract}
Consider a full-duplex (FD) multiuser system where an FD base station (BS) is designed to concurrently serve both downlink   and uplink users in the presence of half-duplex eavesdroppers (Eves).  The target problem is to maximize the minimum secrecy rate (SR) among all legitimate users. A novel user grouping-based fractional time allocation is proposed as an alternative solution, where information signals at the FD-BS are accompanied by  artificial noise  to degrade the Eves'  channels. The SR problem has a highly non-concave and non-smooth objective, subject to non-convex constraints due to coupling between the optimization variables. Nevertheless, we develop a  path-following low-complexity algorithm, which involves only a simple convex program of moderate dimensions at each iteration.   Numerical results demonstrate the merit of the proposed approach compared to existing well-known ones, i.e., conventional FD and FD non-orthogonal multiple access.
\end{abstract}
\begin{IEEEkeywords}
Artificial noise, full-duplex radios,   fractional time allocation, nonconvex programming,   physical-layer security. 
\end{IEEEkeywords}

\vspace{-0.5cm}
\section{Introduction} \label{Introduction}

By enabling simultaneous transmission and reception on the same channel, full duplex (FD) radio, which  has the potential of doubling the spectral efficiency compared to its half-duplex (HD) counterpart, has arisen as a promising technology for  5G wireless networks \cite{ZhangCM15,YadavAcess17}. The major challenge in designing an FD radio is to suppress the self-interference (SI) caused by the signal leakage from the downlink (DL) transmission to the uplink (UL) reception on the same device to a   suitable level, such as a few dB above the background noise. {\color{black}Fortunately, recent advances in hardware design have allowed the FD radio  to be  implemented at a reasonable cost while canceling a major part of the SI through analog circuits and digital signal processing \cite{DUPLO}.} 

Wireless networks have a very wide range of applications, and an unprecedented amount of personal information is transmitted over wireless channels. Consequently, wireless network security is a crucial issue  due to the unalterable open nature of the wireless medium. Physical-layer (PHY-layer) security can potentially provide information security at the PHY-layer by taking advantage of the characteristics  of the wireless medium. An effective means to deliver PHY-layer security is to adopt artificial noise (AN) to degrade the decoding capability of the eavesdropper (Eve) \cite{ChenCST16,Nguyen:TIFS:16}. Notably, with FD radio, we can exploit AN even more effectively \cite{ChenCST16}. With the FD radio at a base station (BS),   communication secrecy can be achieved for both UL and DL transmissions. In \cite{ZhuTSP14}, joint information  and AN beamforming at the FD-BS was investigated to guarantee the security of a single-antenna UL user  and DL user. However, this work assumed that there is neither SI nor co-channel interference (CCI) caused by an UL user's signal to a DL user, which  is highly idealistic. Therefore, an extension was proposed in \cite{ZhuTWC16} by considering both SI and CCI. The work in \cite{SunTWC16} analyzed a trade-off between DL and UL transmit power in FD systems to secure
multiple DL  and UL users. However, in practice, the harmful effect  of SI cannot be neglected if it is not properly controlled, and is proportional to the DL transmission power. Additionally, the CCI may become strong  whenever an UL user is located near  DL users. These shortcomings limit the performance of FD systems \cite{ZhuTSP14,ZhuTWC16,SunTWC16}.

In this paper, we propose a new transmission design to further resolve the practical restrictions mentioned above. Specifically, the near DL users and far UL users are served in a fraction of the time block, and then  FD-BS employs the remaining fractional time  to serve  near UL users and  far DL users. It is worth noting that the effects of SI, CCI and multiuser interference (MUI) are clearly reduced.  On the other hand,  FD-BS can  effectively perform  transmit beamforming even if the number of DL users exceeds the number of transmit antennas because the number of users that are served at the same time is effectively reduced. There are multiple-antenna eavesdroppers  that overhear the information signals from both DL and UL channels. We are concerned with the problem of jointly optimizing linear precoders/beamformers at the FD-BS and allocating the  UL transmit power,  as well as the fractional time (FT) to maximize the minimum secrecy rate (SR) among all users subject to power constraints. In general, such a design problem involves optimization of highly non-concave and non-smooth objective functions subject to non-convex constraints, for which the optimal solution is difficult to find.  The main contributions of the paper are summarized as follows:
\begin{enumerate}
 \item We propose a new transmission model for FD security to simultaneously optimize  both DL and UL information  privacy by exploring user grouping-based fractional
time model; this helps manage the  network interference more effectively than aiming to focus the interference at  Eves.
   \item We propose a path-following computational procedure to maximize the minimum SR by developing a new inner approximation of the original non-convex problem. The convex program solved at each iteration is of moderate dimension,  and thus is computationally efficient.
\item Numerical results show that the proposed FD scheme provides a substantial improvement of the SR performance over the conventional FD and FD non-orthogonal multiple access. 
\end{enumerate}

\emph{Notation}:  $\mathbf{X}^{H}$, $\mathbf{X}^{T}$ and $\tr(\mathbf{X})$  are the Hermitian transpose, normal transpose and trace of a matrix $\mathbf{X}$, respectively. $\|\cdot\|_{\text{F}}$, $\|\cdot\|$ and $|\cdot|$ denote the Frobenius matrix norm,  Euclidean norm of a vector, and  absolute value of a complex scalar, respectively.      $\Re\{\cdot\}$ represents the real part of the argument.   

\vspace{-0.2cm}
\section{System Model and  Problem Formulation} \label{System Model}
%
\subsection{Signal Processing Model}
\begin{figure}[t]
\centering
\includegraphics[width=0.48\textwidth,trim={-0cm 0.0cm 0cm 0cm}]{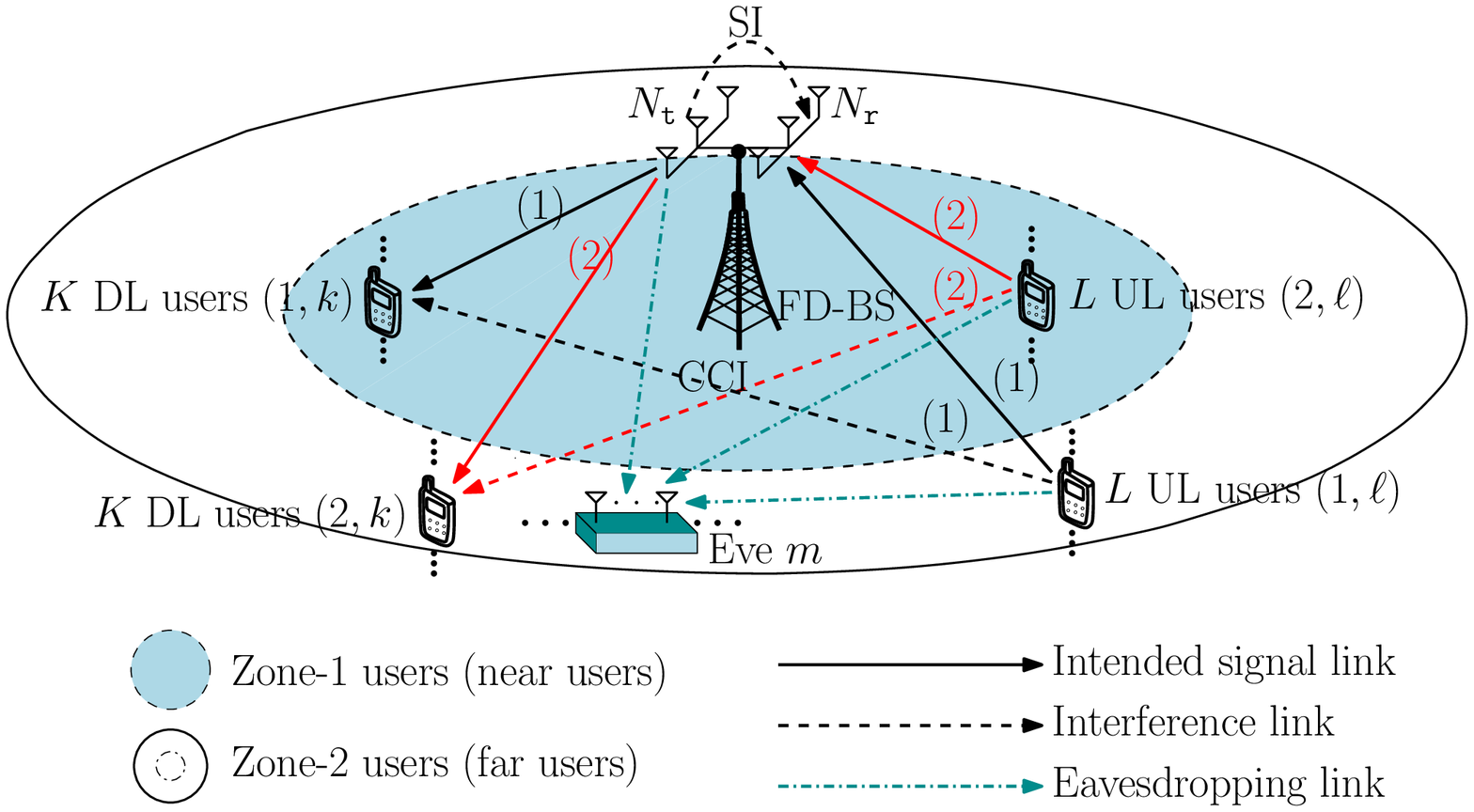}
\caption{A multiuser system model with an FD-BS serving $2K$ DL users and $2L$ UL users in the presence of $M$ Eves.}
\label{fig:SM:1}
\end{figure}
%

Consider a multiuser communication system illustrated in Fig.~\ref{fig:SM:1}, where the FD-BS is equipped with $N_{\mathtt{t}}$ transmit antennas and $N_{\mathtt{r}}$  receive antennas to simultaneously serve $2K$ DL users and $2L$ UL users over the same radio frequency band. Each legitimate user  is equipped with a single antenna to ensure low hardware complexity. The communications of both DL and UL are overheard by $M$ non-colluding  Eves, where the $m$-th Eve has $N_{e,m}$ antennas. Herein, we use a natural and efficient divisions of the coverage area \cite{Nguyen:JSAC:17} by dividing users  into two zones. To lighten the notation, we assume that
there are $K$ DL users and $L$ UL users located in a zone nearer the FD-BS (referred to as zone-1 of near users), and $K$ DL users and $L$ UL users are located in a zone farther from the FD-BS (called  zone-2 of far users). 

In this paper, we split each communication time block, denoted by $T$, into two sub-time blocks orthogonally. As previously mentioned, in order to mitigate the harmful effects of  SI, CCI and MUI,   $K$ near DL users and $L$ far UL users are grouped into group-1, and $K$ far DL users and $L$ near UL users are grouped into   group-2.  During the first duration  $\tau T\; (0 < \tau < 1)$,  users in group-1 are served while  users in group-2 are served in the remaining duration  $(1-\tau)T$. Although each group still operates in the FD mode, the inter-group interference, i.e., interference across groups 1 and 2, can be eliminated through the FT allocation.  Without loss of generality, the communication time block $T$ is normalized to 1. Upon denoting $\mathcal{K}\triangleq\{1,2,\cdots,K\}$ and $\mathcal{L}\triangleq\{1,2,\cdots,L\}$,  the sets of DL  and UL users are $\mathcal{D}\triangleq \mathcal{I}\times\mathcal{K}$ and $\mathcal{U}\triangleq \mathcal{I}\times\mathcal{L}$ for $\mathcal{I}\triangleq\{1,2\}$, respectively. Thus, the $k$-th DL user and the $\ell$-th UL user in the $i$-th group are referred to as DL user ($i,k$) and UL user ($i,\ell$), respectively.

\textit{1) Received Signal Model at the FD-BS and DL Users:}
 We consider that the FD-BS uses a transmit beamfomer $\mathbf{w}_{i,k}\in\mathbb{C}^{N_{\mathtt{t}}\times 1}$ to transfer the information bearing signal $x_{i,k}$, with $\mathbb{E}\{|x_{i,k}|^2\}=1$,  to DL user $(i,k)$. 
The FD-BS also injects an AN to interfere with the reception of the Eves as:
$\mathbf{x}_i = \sum_{k=1}^K\mathbf{w}_{i,k}x_{i,k} + \mathbf{v}_i$
for DL users in group-$i$, where $\mathbf{v}_i\in\mathbb{C}^{N_{\mathtt{t}}\times 1}, i=\{1,2\}$ is the AN vector whose elements are zero-mean complex Gaussian random variables, i.e., $\mathbf{v}_i\sim\mathcal{CN}(\mathbf{0},\mathbf{V}_i\mathbf{V}_i^H)$ with $\mathbf{V}_i\in\mathbb{C}^{N_{\mathtt{t}}\times N_{\mathtt{t}}}$.   The received signal at DL user $(i,k)$  can be expressed as
\begin{IEEEeqnarray}{rCl}\label{eq:signalDL1k}
y_{i,k} = \mathbf{h}_{i,k}^H\mathbf{w}_{i,k}x_{i,k} + \sum\nolimits_{j=1,j\neq k}^K\mathbf{h}_{i,k}^H\mathbf{w}_{i,j}x_{i,j} \nonumber\\ +\; \mathbf{h}_{i,k}^H\mathbf{v}_{i}  + \sum\nolimits_{\ell=1}^Lf_{i, k, \ell}\rho_{i,\ell}\tilde{x}_{i,\ell} + n_{i,k}
\end{IEEEeqnarray}
where $\mathbf{h}_{i,k}\in\mathbb{C}^{N_{\mathtt{t}}\times 1}$ is the transmit channel vector from the FD-BS to DL user $(i,k)$. In \eqref{eq:signalDL1k}, the term $\sum_{\ell=1}^Lf_{i, k, \ell}\rho_{i,\ell}\tilde{x}_{i,\ell}$ represents the CCI from $L$ UL users to DL user $(i,k)$, where $f_{i, k, \ell}\in\mathbb{C}$, $\rho_{i,\ell}$ and $\tilde{x}_{i,\ell}$ with $\mathbb{E}\{|\tilde{x}_{i,\ell}|^2\}=1$ are the complex channel coefficient from UL user $(i,\ell)$ to DL user $(i,k)$, transmit power and message of UL user $(i,\ell)$, respectively.  $n_{i,k}\sim\mathcal{CN}(0,\sigma^2)$  denotes the additive white Gaussian noise (AWGN) at DL user $(i,k)$. By defining $\tau_1:=\tau$ and $\tau_2:=1-\tau$, the information rate decoded by DL user $(i,k)$  in nat/sec/Hz is given by \cite{Nguyen:TCOM:17} 
\begin{IEEEeqnarray}{rCl}\label{eq:RateDLUs}
C_{i,k}^{\mathtt{D}}(\mathbf{X}_i,\tau_i) &=& \tau_i\ln\Bigl(1 + \frac{|\mathbf{h}_{i,k}^H\mathbf{w}_{i,k}|^2}{\varphi_{i,k}(\mathbf{X}_i)}\Bigr)
\end{IEEEeqnarray}
where $\mathbf{X}_i\triangleq  \bigl\{\mathbf{w}_{i},\mathbf{V}_i,\boldsymbol{\rho}_i\bigl\},$ with $\mathbf{w}_{i}\triangleq\{\mathbf{w}_{i,k}\}_{k\in\mathcal{K}}$, $\boldsymbol{\rho}_i\triangleq\{\rho_{i,\ell}\}_{\ell\in\mathcal{L}}, i=1,2,$  and 
$\varphi_{i,k}(\mathbf{X}_i) \triangleq \sum_{j=1,j\neq k}^K|\mathbf{h}_{i,k}^H\mathbf{w}_{i,j}|^2 + \|\mathbf{h}_{i,k}^H\mathbf{V}_{i}\|^2  + \sum_{\ell=1}^L\rho_{i,\ell}^2|f_{i, k, \ell}|^2 + \sigma^2.\nonumber$
The received signal at  the FD-BS for reception of $L$ UL users in the $i$-th group can be expressed as
\begin{IEEEeqnarray}{rCl}\label{eq:signalUL2p}
\mathbf{y}_{i,bs} =  \sum\nolimits_{\ell=1}^L\rho_{i,\ell}\mathbf{g}_{i,\ell}\tilde{x}_{i,\ell}  + \sqrt{\sigma_{\mathtt{SI}}}\sum\nolimits_{k=1}^K\mathbf{G}_{\mathtt{SI}}^H\mathbf{w}_{i,k}x_{i,k} \nonumber\\ +\; \sqrt{\sigma_{\mathtt{SI}}}\mathbf{G}_{\mathtt{SI}}^H\mathbf{v}_{i}  + \mathbf{n}_{i,bs}\qquad
\end{IEEEeqnarray}
where $\mathbf{g}_{i,\ell}\in\mathbb{C}^{N_{\mathtt{r}}\times 1}$ is the receive channel vector from UL user $(i,\ell)$ to the FD-BS. The term $\sqrt{\sigma_{\mathtt{SI}}}\sum_{k=1}^K\mathbf{G}_{\mathtt{SI}}^H\mathbf{w}_{i,k}x_{i,k}$ in (\ref{eq:signalUL2p}) represents the residual SI  after  cancellation in analog and digital domains; $\mathbf{G}_{\mathtt{SI}}\in\mathbb{C}^{N_{t}\times N_{r}}$ denotes a fading loop channel which impairs the UL signal detection at the FD-BS due to a concurrent DL transmission  and $0\leq\sigma_{\mathtt{SI}}< 1$ is used to model the degree of residual SI. $\mathbf{n}_{i,bs}\sim\mathcal{CN}(\mathbf{0},\sigma^2\mathbf{I}_{N_{\mathtt{r}}})$ denotes the AWGN at  the FD-BS. We adopt the minimum mean square error and successive interference
cancellation (MMSE-SIC) decoder at  the FD-BS \cite{Tse:book:05}.  Hence, the information rate in decoding the UL user $(i,\ell)$'s  message is given by \cite{Nguyen:TCOM:17} 
\begin{IEEEeqnarray}{rCl}\label{eq:RateULUs}
C_{i,\ell}^{\mathtt{U}}(\mathbf{X}_i,\tau_i) &=& \tau_i\ln\Bigl(1 + \rho_{i,\ell}^2\mathbf{g}_{i,\ell}^H\boldsymbol{\Phi}_{i,\ell}(\mathbf{X}_i)^{-1}\mathbf{g}_{i,\ell}  \Bigr)
\end{IEEEeqnarray}
where 
$\boldsymbol{\Phi}_{i,\ell}(\mathbf{X}_i) \triangleq \sum_{j>\ell}^L\rho_{i,j}^2\mathbf{g}_{i,j}\mathbf{g}_{i,j}^H + \sigma_{\mathtt{SI}}\sum_{k=1}^K\mathbf{G}_{\mathtt{SI}}^H\mathbf{w}_{i,k}\mathbf{w}_{i,k}^H\mathbf{G}_{\mathtt{SI}}  + \sigma_{\mathtt{SI}}\mathbf{G}_{\mathtt{SI}}^H\mathbf{V}_{i}\mathbf{V}_{i}^H\mathbf{G}_{\mathtt{SI}}  + \sigma^2\mathbf{I}_{N_{\mathtt{r}}}.$

\textit{2) Received Signal Model at Eves:}
The information signals of group-$i$ leaked out to the $m$-th Eve during the FT $\tau_i$ can be expressed as
\begin{equation}\label{eq:Eavsignal}
\mathbf{y}_{i,m} = \mathbf{H}_m^H\Bigl(\sum_{k=1}^K\mathbf{w}_{i,k}x_{i,k}+ \mathbf{v}_i\Bigr)+ \sum_{\ell=1}^L\rho_{i,\ell}\mathbf{g}_{m,i,\ell}^H\tilde{x}_{i,\ell} + \mathbf{n}_{e,m}
\end{equation}
where $\mathbf{H}_m\in\mathbb{C}^{N_{\mathtt{t}}\times N_{e,m}}$ and $\mathbf{g}_{m,i,\ell}\in\mathbb{C}^{1\times N_{e,m}}$ are the wiretap channel matrix and vector from the FD-BS and UL user ($i,\ell$) to the $m$-th Eve, respectively. $\mathbf{n}_{e,m}\sim\mathcal{CN}(\mathbf{0},\sigma^2\mathbf{I}_{N_{e,m}})$ denotes the AWGN at the $m$-th Eve. The  information  rates at the $m$-th Eve, corresponding to the signal targeted for DL user $(i,k)$ and UL user $(i,\ell)$, are given by
\begin{IEEEeqnarray}{rCl}\label{eq:RateEvam}
C_{m,i,k}^{\mathtt{ED}}(\mathbf{X}_i,\tau_i) &=& \tau_i\ln\bigl(1 + \|\mathbf{H}_{m}^H\mathbf{w}_{i,k}\|^2/\psi_{m,i,k}(\mathbf{X}_i)\bigr),\quad\IEEEyessubnumber\\
C_{m,i,\ell}^{\mathtt{EU}}(\mathbf{X}_i,\tau_i) &=& \tau_i\ln\bigl(1 + \rho_{i,\ell}^2\|\mathbf{g}_{m,i,\ell}^H\|^2/\chi_{m,i,\ell}(\mathbf{X}_i)\bigr)\IEEEyessubnumber
\end{IEEEeqnarray}
respectively, where
\begin{IEEEeqnarray}{rCl}
\psi_{m,i,k}(\mathbf{X}_i)&\triangleq& \sum\nolimits_{j=1,j\neq k}^K\|\mathbf{H}_{m}^H\mathbf{w}_{i,j}\|^2  + \|\mathbf{H}_{m}^H\mathbf{V}_{i}\|_\F^2 \nonumber\\ &+& \sum\nolimits_{\ell=1}^L\rho_{i,\ell}^2\|\mathbf{g}_{m,i,\ell}^H\|^2 + N_{e,m}\sigma^2,\nonumber\\
\chi_{m,i,\ell}(\mathbf{X}_i) &\triangleq& \sum\nolimits_{k=1}^K\|\mathbf{H}_{m}^H\mathbf{w}_{i,k}\|^2 + \|\mathbf{H}_{m}^H\mathbf{V}_{i}\|_\F^2 \nonumber\\ &+& \sum\nolimits_{j=1,j\neq \ell}^L \rho_{i,j}^2\|\mathbf{g}_{m,i,j}^H\|^2 + N_{e,m}\sigma^2.\nonumber
\end{IEEEeqnarray}

\vspace{-0.346cm}
\subsection{Optimization Problem Formulation} \label{OptimizationProblem}
We aim to jointly optimize  the transmit information vectors and AN matrices ($\mathbf{X} \triangleq \{\mathbf{X}_1, \mathbf{X}_2\}$), along with  the FT  ($\boldsymbol{\tau}\triangleq\{\tau_1,\tau_2\}$) to maximize the minimum (max-min) SR. The optimization problem  can be mathematically formulated as
\begin{IEEEeqnarray}{rCl}\label{eq:OP1}
\underset{\mathbf{X},\boldsymbol{\tau}}{\maxi}&&\;\underset{\substack{(i,k)\in\mathcal{D}, (i,\ell)\in\mathcal{U}}}{\mini}\;\left\{R_{i,k}^{\mathtt{D}}(\mathbf{X}_i,\tau_i),R_{i,\ell}^{\mathtt{U}}(\mathbf{X}_i,\tau_i)\right\} \IEEEyessubnumber\label{eq:OP1:a}\\
\st\ &&  \sum\nolimits_{i=1}^2\tau_i\Bigl(\sum\nolimits_{k=1}^K\|\mathbf{w}_{i,k}\|^2 + \|\mathbf{V}_i\|_\F^2\Bigr) \leq  P_{bs}^{\max},   \IEEEyessubnumber\label{eq:OP1:b}\qquad \\
&& \tau_i \rho_{i,\ell}^2 \leq P_{i,\ell}^{\max},\;\forall (i,\ell)\in\mathcal{U},  \IEEEyessubnumber\label{eq:OP1:d}\\
&& \rho_{i,\ell} \geq 0, \,\forall (i,\ell)\in\mathcal{U}, \IEEEyessubnumber\label{eq:OP1:e}\\
&& \tau_1>0, \tau_2>0, \tau_1 + \tau_2 \leq 1\IEEEyessubnumber\label{eq:OP1:f}
\end{IEEEeqnarray}
where $\mathcal{M}\triangleq\{1,2,\cdots,M\}$ and
\begin{IEEEeqnarray}{rCl}\label{eq:Secrecyrates}
R_{i,k}^{\mathtt{D}}(\mathbf{X}_i,\tau_i) &\triangleq& \bigl[C_{i,k}^{\mathtt{D}}(\mathbf{X}_i,\tau_i) - \underset{m\in\mathcal{M}}{\max}\,C_{m,i,k}^{\mathtt{ED}}(\mathbf{X}_i,\tau_i)\bigr]^+,\quad\  \IEEEyessubnumber\\
R_{i,\ell}^{\mathtt{U}}(\mathbf{X}_i,\tau_i) &\triangleq& \bigl[C_{i,\ell}^{\mathtt{U}}(\mathbf{X}_i,\tau_i) - \underset{m\in\mathcal{M}}{\max}\,C_{m,i,\ell}^{\mathtt{EU}}(\mathbf{X}_i,\tau_i)\bigl]^+\IEEEyessubnumber
\end{IEEEeqnarray}
with $[x]^+\triangleq\max\{0,x\}$.
Constraint \eqref{eq:OP1:b} merely means that the total transmit power at the FD-BS does not exceed the  power budget, $P_{bs}^{\max}$ \cite{Nguyen:TCOM:17,Nguyen:Access:17}, while  constraints in \eqref{eq:OP1:d} are  individual power budgets at the UL user ($i,\ell$), $P_{i,\ell}^{\max}$. 
 
\vspace{-0.18cm}
\section{Proposed Optimal Solution}\label{sec:knownCSI}
Similar to prior work  dealing with the resource allocation in FD systems,  perfect instantaneous channel state information (CSI) of the legitimate users is assumed to be available at the transmitters \cite{Nguyen:TCOM:17,YadavAcess17}. On the other hand, we consider that Eves are always passive and do not transmit. In this case, we assume that only the statistics of CSI  for Eves (i.e., the first- and second-order statistics) are available at the transmitter \cite{AkgunTCOM17}, i.e.,
\begin{IEEEeqnarray}{rCl}
\bar{\mathbf{H}}_m = \mathbb{E}\left\{\mathbf{H}_m\mathbf{H}_m^H\right\}\ \text{and}\
\bar{g}_{m,i,\ell} =  \mathbb{E}\left\{\mathbf{g}_{m,i,\ell}\mathbf{g}_{m,i,\ell}^H\right\}.
\label{eq:CDI}\end{IEEEeqnarray}
\vspace{-0.6cm}
\subsection{Equivalent Transformations for \eqref{eq:OP1}}
 We first introduce the new variables $\eta$ and $\boldsymbol{\Gamma}\triangleq\bigr\{\Gamma_{i,k}^{\mathtt{D}},\Gamma_{i,\ell}^{\mathtt{U}}\bigr\}_{i\in\mathcal{I},k\in\mathcal{K},\ell\in\mathcal{L}}$ to equivalently re-write \eqref{eq:OP1} as: 
\begin{IEEEeqnarray}{rCl}\label{eq:OP3}
&&\underset{\mathbf{X},\boldsymbol{\tau},\eta,\boldsymbol{\Gamma}}{\maxi}\quad \eta \IEEEyessubnumber\label{eq:OP3:a}\\
&&\st\; \eqref{eq:OP1:b},  \eqref{eq:OP1:d}, \eqref{eq:OP1:e}, \eqref{eq:OP1:f}\IEEEyessubnumber\label{eq:OP3:j},\\
&&\qquad C_{i,k}^{\mathtt{D}}(\mathbf{X}_i,\tau_i) - \Gamma_{i,k}^{\mathtt{D}}\geq \eta,\ \forall (i,k)\in\mathcal{D},   \IEEEyessubnumber\label{eq:OP3:b}\\
&&\qquad C_{m,i,k}^{\mathtt{ED}}(\mathbf{X}_i,\tau_i) \leq \Gamma_{i,k}^{\mathtt{D}}, \forall m\in\mathcal{M}, (i,k)\in\mathcal{D},\IEEEyessubnumber\label{eq:OP3:c}\\
&&\qquad  C_{i,\ell}^{\mathtt{U}}(\mathbf{X}_i,\tau_i) - \Gamma_{i,\ell}^{\mathtt{U}}\geq \eta,\ \forall(i,\ell)\in\mathcal{U},   \IEEEyessubnumber\label{eq:OP3:f}\\
&&\qquad C_{m,i,\ell}^{\mathtt{EU}}(\mathbf{X}_i,\tau_i) \leq \Gamma_{i,\ell}^{\mathtt{U}},\ \forall m\in\mathcal{M}, (i,\ell)\in\mathcal{U}.\IEEEyessubnumber\label{eq:OP3:g}
\end{IEEEeqnarray}
Problem \eqref{eq:OP3} still remains intractable. To solve it, we make the variable change:
\begin{equation}\label{eq:changevariables:a}
  \tau_1 = 1/\alpha_1\ \text{and}\  \tau_2 = 1/\alpha_2
\end{equation}
which implies the following convex constraint
\begin{equation}\label{eq:changevariables:b}
  1/\alpha_1 +  1/\alpha_2 \leq 1, \forall \alpha_i > 1, i\in\mathcal{I}
\end{equation}
where $\boldsymbol{\alpha}\triangleq\{\alpha_1,\alpha_2\}$ are new variables. Using \eqref{eq:changevariables:a},  constraints \eqref{eq:OP3:b} and \eqref{eq:OP3:f} become
\begin{IEEEeqnarray}{rCl}\label{eq:LegitimateUEs}
C_{i,k}^{\mathtt{D}}(\mathbf{X}_i,\alpha_i) &\geq& \eta + \Gamma_{i,k}^{\mathtt{D}},\ \forall (i,k)\in\mathcal{D},\IEEEyessubnumber\label{eq:changeOP3:a}\qquad\\
C_{i,\ell}^{\mathtt{U}}(\mathbf{X}_i,\alpha_i) &\geq& \eta + \Gamma_{i,\ell}^{\mathtt{U}},\ \forall(i,\ell)\in\mathcal{U}.\IEEEyessubnumber\label{eq:changeOP3:c}
\end{IEEEeqnarray}
For a safe design as in \cite{Nguyen:TIFS:16},  we consider the replacement of  constraints \eqref{eq:OP3:c} and \eqref{eq:OP3:g} by their minimum outage requirement 
\begin{IEEEeqnarray}{rCl}\label{eq:RateEvamChange}
\Pro\bigl(\underset{m\in\mathcal{M}}{\max} C_{m,i,k}^{\mathtt{ED}}(\mathbf{X}_i,\alpha_i)\leq \Gamma_{i,k}^{\mathtt{D}}\bigl) &\geq& \epsilon_{i,k},   \forall(i,k)\in\mathcal{D},\IEEEyessubnumber\label{eq:RateEvamChange:a}\\
\Pro\bigl(\underset{m\in\mathcal{M}}{\max}C_{m,i,\ell}^{\mathtt{EU}}(\mathbf{X}_i,\alpha_i)\leq \Gamma_{i,\ell}^{\mathtt{U}}\bigl) &\geq& \tilde{\epsilon}_{i,\ell}, \forall(i,\ell)\in\mathcal{U}\qquad\quad \IEEEyessubnumber\label{eq:RateEvamChange:c}
\end{IEEEeqnarray}
where $\epsilon_{i,k}$ and $\tilde{\epsilon}_{i,\ell}$ are  given values. From \eqref{eq:LegitimateUEs} and \eqref{eq:RateEvamChange}, and by substituting \eqref{eq:changevariables:a} and \eqref{eq:changevariables:b} to \eqref{eq:OP1:b}-\eqref{eq:OP1:d}, the optimization problem \eqref{eq:OP3} is equivalently re-expressed as
\begin{IEEEeqnarray}{rCl}\label{eq:OP4}
\underset{\mathbf{X},\eta,\boldsymbol{\Gamma},\boldsymbol{\alpha}}{\maxi}&&\quad \eta \IEEEyessubnumber\label{eq:OP4:a}\\
\st\;&&  \eqref{eq:OP1:e}, \eqref{eq:changevariables:b}, \eqref{eq:LegitimateUEs}, \eqref{eq:RateEvamChange},  \IEEEyessubnumber\label{eq:OP4:b}\\
&& \bigl(1-1/\alpha_2\bigr)\Bigl(\sum\nolimits_{k=1}^K\|\mathbf{w}_{1,k}\|^2 + \|\mathbf{V}_1\|_{\mathrm{F}}^2\Bigr) \nonumber\\
&& +\; \frac{1}{\alpha_2}\Bigl(\sum\nolimits_{k=1}^{K}\|\mathbf{w}_{2,k}\|^2 + \|\mathbf{V}_2\|_{\mathrm{F}}^2\Bigr) \leq  P_{bs}^{\max},   \IEEEyessubnumber\label{eq:OP4:c}\qquad\\
&& \bigl(1-1/\alpha_2\bigl)\rho_{1,\ell}^2 \leq P_{1,\ell}^{\max},\;\forall \ell\in\mathcal{L},  \IEEEyessubnumber\label{eq:OP4:d}\\
&& \rho_{2,\ell}^2/\alpha_2 \leq P_{2,\ell}^{\max},\;\forall \ell\in\mathcal{L}.  \IEEEyessubnumber\label{eq:OP4:e}
\end{IEEEeqnarray}

\vspace{-0.25cm}
\subsection{Proposed Convex Approximation-Based Iterations}
Before proceeding further, we note that except for \eqref{eq:OP1:e}, \eqref{eq:changevariables:b} and \eqref{eq:OP4:e}, the constraints are  non-convex.
{\color{black}The proposed method is mainly based on an inner approximation framework \cite{Marks:78}  to handle the non-convex parts.}

\textit{Convex Approximation of  Constraints \eqref{eq:LegitimateUEs}:} 
We first introduce the following inequality at a feasible point $(\gamma^{(\kappa)},t^{(\kappa)})$:
\begin{eqnarray}\label{eq:ineupper}
\ds\zeta(\gamma,t)\triangleq\frac{\ln(1+\gamma)}{t}&\geq&\ds \mathtt{A}^{(\kappa)} - \mathtt{B}^{(\kappa)}\frac{1}{\gamma} - \mathtt{C}^{(\kappa)}t\label{inq1}
\end{eqnarray}
for all $\gamma > 0, \gamma^{(\kappa)} > 0, t>0, t^{(\kappa)}>0$, where $\mathtt{A}^{(\kappa)}\triangleq 2\zeta(\gamma^{(\kappa)},t^{(\kappa)}) + \frac{\gamma^{(\kappa)}}{t^{(\kappa)}(\gamma^{(\kappa)}+1)},$ $
\mathtt{B}^{(\kappa)}\triangleq\frac{(\gamma^{(\kappa)})^2}{t^{(\kappa)}(\gamma^{(\kappa)}+1)},$ and 
$\mathtt{C}^{(\kappa)}\triangleq\frac{\zeta(\gamma^{(\kappa)},t^{(\kappa)})}{t^{(\kappa)}}.$ The proof of \eqref{inq1} is omitted due to the space limitation.   In the spirit of \cite{WES06}, for $\bar{\mathbf{w}}_{i,k}=e^{-j\mathtt{arg}(\mathbf{h}_{i,k}^H\mathbf{w}_{i,k})}\mathbf{w}_{i,k}$ with $j=\sqrt{-1}$, it follows that $|\mathbf{h}_{i,k}^H\mathbf{w}_{i,k}|=\mathbf{h}_{i,k}^H\bar{\mathbf{w}}_{i,k}=\Re\{\mathbf{h}_{i,k}^H\bar{\mathbf{w}}_{i,k}\}\geq 0$ and $|\mathbf{h}_{i',k'}^H\mathbf{w}_{i,k}|=|\mathbf{h}_{i',k'}^H\bar{\mathbf{w}}_{i,k}|$ for all $(i',k')\neq(i,k)$. Thus, $\gamma_{i,k}^{\mathtt{D}}(\mathbf{X}_i)\triangleq|\mathbf{h}_{i,k}^H\mathbf{w}_{i,k}|^2/\varphi_{i,k}(\mathbf{X}_i)$ can be equivalently replaced by
$\gamma_{i,k}^{\mathtt{D}}(\mathbf{X}_i)=\bigr(\Re\{\mathbf{h}_{i,k}^H\mathbf{w}_{i,k}\}\bigl)^2/\varphi_{i,k}(\mathbf{X}_i)$
with the condition 
\begin{IEEEeqnarray}{rCl}\label{eq:poscondi}
\Re\{\mathbf{h}_{i,k}^H\mathbf{w}_{i,k}\}\geq 0,\ \forall (i,k)\in\mathcal{D}. 
 \end{IEEEeqnarray}
By using \eqref{eq:ineupper},  $C_{i,k}^{\mathtt{D}}(\mathbf{X}_i,\alpha_i)$ in \eqref{eq:changeOP3:a} is lower bounded at  a feasible point $(\mathbf{X}_i^{(\kappa)},\alpha_i^{(\kappa)})$ found at the $(\kappa$-1)-th iteration by
\begin{IEEEeqnarray}{rCl}
  \frac{\ln\bigl(1+ \gamma_{i,k}^{\mathtt{D}}(\mathbf{X}_i)  \bigr)}{\alpha_i} &\geq& \mathtt{A}_{i,k}^{(\kappa)} - \mathtt{B}_{i,k}^{(\kappa)}\frac{\varphi_{i,k}(\mathbf{X}_i)}{\bigr(\Re\{\mathbf{h}_{i,k}^H\mathbf{w}_{i,k}\}\bigl)^2}- \mathtt{C}_{i,k}^{(\kappa)}\alpha_i \qquad
	\label{eq:Rate1kappro}
 \end{IEEEeqnarray}
where
$\mathtt{A}_{i,k}^{(\kappa)} \triangleq 2C_{i,k}^{\mathtt{D}}\bigl(\mathbf{X}_i^{(\kappa)},\alpha_i^{(\kappa)}\bigl) + \frac{\gamma_{i,k}^{\mathtt{D}}(\mathbf{X}_i^{(\kappa)})}{\alpha_i^{(\kappa)}\bigl(\gamma_{i,k}^{\mathtt{D}}(\mathbf{X}_i^{(\kappa)}) +1\bigr)},$ $
\mathtt{B}_{i,k}^{(\kappa)} \triangleq \frac{\bigl(\gamma_{i,k}^{\mathtt{D}}(\mathbf{X}_i^{(\kappa)})\bigr)^2}{\alpha_i^{(\kappa)}\bigl(\gamma_{i,k}^{\mathtt{D}}(\mathbf{X}_i^{(\kappa)})+1\bigr)},$ and $
\mathtt{C}_{i,k}^{(\kappa)}\triangleq \frac{C_{i,k}^{\mathtt{D}}\bigl(\mathbf{X}_i^{(\kappa)},\alpha_i^{(\kappa)}\bigl)}{\alpha_i^{(\kappa)}}.$
We make use of the inequality $\|\mathbf{x}\|^2 \geq 2\Re\{(\mathbf{x}^{(\kappa)})^H\mathbf{x}\} - \|\mathbf{x}^{(\kappa)}\|^2, \forall \mathbf{x}\in\mathbb{C}^N, \mathbf{x}^{(\kappa)}\in\mathbb{C}^N$ due to the convexity of the function $\|\mathbf{x}\|^2$ to further expose the hidden convexity of  \eqref{eq:Rate1kappro} as
\begin{IEEEeqnarray}{rCl}
  \frac{\ln\bigl(1+ \gamma_{i,k}^{\mathtt{D}}(\mathbf{X}_i)  \bigr)}{\alpha_i} &\geq& \mathtt{A}_{i,k}^{(\kappa)} - \mathtt{B}_{i,k}^{(\kappa)}\frac{\varphi_{i,k}(\mathbf{X}_i)}{\Psi_{i,k}^{(\kappa)}(\mathbf{w}_{i,k})}
	- \mathtt{C}_{i,k}^{(\kappa)}\alpha_i \nonumber\\
	&:=&C_{i,k}^{\mathtt{D},(\kappa)}(\mathbf{X}_i,\alpha_i)
	\label{eq:Rate1ConvexAppro}
 \end{IEEEeqnarray}
 over the trust region
\begin{equation}\label{eq:R1ktrust}
2\Re\{\mathbf{h}_{i,k}^H\mathbf{w}_{i,k}\}-\Re\{\mathbf{h}_{i,k}^H\mathbf{w}_{i,k}^{(\kappa)}\} > 0,\ \forall(i,k)\in\mathcal{D}
\end{equation}
where 
$\Psi_{i,k}^{(\kappa)}(\mathbf{w}_{i,k})\triangleq \Re\{\mathbf{h}_{i,k}^H\mathbf{w}_{i,k}^{(\kappa)}\}\bigr(2\Re\{\mathbf{h}_{i,k}^H\mathbf{w}_{i,k}\}-\Re\{\mathbf{h}_{i,k}^H\mathbf{w}_{i,k}^{(\kappa)}\}\bigl).$
Note that $C_{i,k}^{\mathtt{D},(\kappa)}(\mathbf{X}_i,\alpha_i)$ is a lower bounding concave function of $C_{i,k}^{\mathtt{D}}(\mathbf{X}_i,\alpha_i)$, which also satisfies
$C_{i,k}^{\mathtt{D},(\kappa)}\bigl(\mathbf{X}_i^{(\kappa)},\alpha_i^{(\kappa)}\bigl) = C_{i,k}^{\mathtt{D}}\bigl(\mathbf{X}_i^{(\kappa)},\alpha_i^{(\kappa)}\bigl).$
As a result,  \eqref{eq:changeOP3:a} can be iteratively replaced by the following convex constraint:
\begin{IEEEeqnarray}{rCl}\label{eq:R1kConvex}
C_{i,k}^{\mathtt{D},(\kappa)}(\mathbf{X}_i,\alpha_i) \geq \eta + \Gamma_{i,k}^{\mathtt{D}},\ (i,k)\in\mathcal{D}.
\end{IEEEeqnarray}

By defining  $\gamma_{i,\ell}^{\mathtt{U}}(\mathbf{X}_i)\triangleq \rho_{i,\ell}^2\mathbf{g}_{i,\ell}^H\boldsymbol{\Phi}_{i,\ell}(\mathbf{X}_i)^{-1}\mathbf{g}_{i,\ell}$, the left hand-side (LHS) of \eqref{eq:changeOP3:c} is lower bounded at the feasible point $\bigl(\mathbf{X}_i^{(\kappa)},\alpha_i^{(\kappa)}\bigr)$ as
\begin{IEEEeqnarray}{rCl}
   \frac{\ln\bigl(1 + \gamma_{i,\ell}^{\mathtt{U}}(\mathbf{X}_i)  \bigr)}{\alpha_i} &\geq& \tilde{\mathtt{A}}_{i,\ell}^{(\kappa)} + \tilde{\mathtt{B}}_{i,\ell}^{(\kappa)}\rho_{i,\ell} - \frac{\phi_{i,\ell}^{(\kappa)}\bigl(\mathbf{X}_i\bigr)}{\alpha_i^{(\kappa)}} - \tilde{\mathtt{C}}_{i,\ell}^{(\kappa)}\alpha_i \nonumber\\
	&:=& C_{i,\ell}^{\mathtt{U},(\kappa)}(\mathbf{X}_i,\alpha_i)
	\label{eq:Rate1qappro}
 \end{IEEEeqnarray}
where 
$\tilde{\mathtt{A}}_{i,\ell}^{(\kappa)} \triangleq  2C_{i,\ell}^{\mathtt{U}}\bigl(\mathbf{X}_i^{(\kappa)},\alpha_i^{(\kappa)}\bigr) - \gamma_{i,\ell}^{\mathtt{U}}(\mathbf{X}_i^{(\kappa)})/\alpha_i^{(\kappa)},
\tilde{\mathtt{B}}_{i,\ell}^{(\kappa)} \triangleq 2\gamma_{i,\ell}^{\mathtt{U}}(\mathbf{X}_i^{(\kappa)})/(\rho_{i,\ell}^{(\kappa)}\alpha_i^{(\kappa)}),\   \tilde{\mathtt{C}}_{i,\ell}^{(\kappa)} \triangleq   C_{i,\ell}^{\mathtt{U}}\bigl(\mathbf{X}_i^{(\kappa)},\alpha_i^{(\kappa)}\bigr)/\alpha_i^{(\kappa)},\                         
\phi_{i,\ell}^{(\kappa)}\bigl(\mathbf{X}_i\bigr) \triangleq \tr\bigl(\bigl(\rho_{i,\ell}^2\mathbf{g}_{i,\ell}\mathbf{g}_{i,\ell}^H + \boldsymbol{\Phi}_{i,\ell}(\mathbf{X}_i)\bigr)\boldsymbol{\Omega}_{i,\ell}^{(\kappa)} \bigr), $ and $ 
 \boldsymbol{\Omega}_{i,\ell}^{(\kappa)} \triangleq \boldsymbol{\Phi}_{i,\ell}\bigl(\mathbf{X}_i^{(\kappa)}\bigr)^{-1} - \boldsymbol{\Phi}_{i,\ell-1}\bigl(\mathbf{X}_i^{(\kappa)}\bigr)^{-1}\succeq\mathbf{0}.$ 
It follows from \eqref{eq:Rate1qappro} that $C_{i,\ell}^{\mathtt{U},(\kappa)}(\mathbf{X}_i,\alpha_i)$ is a concave  function, which agrees with $C_{i,\ell}^{\mathtt{U}}(\mathbf{X}_i,\alpha_i)$ at the feasible point $\bigl(\mathbf{X}_i^{(\kappa)},\alpha_i^{(\kappa)}\bigr)$ as
$C_{i,\ell}^{\mathtt{U},(\kappa)}\bigl(\mathbf{X}_i^{(\kappa)},\alpha_i^{(\kappa)}\bigr) = 
C_{i,\ell}^{\mathtt{U}}\bigl(\mathbf{X}_i^{(\kappa)},\alpha_i^{(\kappa)}\bigr).$
Thus, the constraint \eqref{eq:changeOP3:c} can be iteratively 
replaced by
\begin{equation}\label{eq:R1qConvexappro}
C_{i,\ell}^{\mathtt{U},(\kappa)}\bigl(\mathbf{X}_i,\alpha_i\bigr) \geq \eta + \Gamma_{i,\ell}^{\mathtt{U}},\ \forall(i,\ell)\in\mathcal{U}.
\end{equation}

\textit{Convex Approximation of  Constraints \eqref{eq:RateEvamChange}:}  For a given feasible point $x^{(\kappa)}$, the following inequality holds true:
\begin{equation}\label{ineupp}
\ln(1+x) \leq \mathtt{a}(x^{(\kappa)}) + \mathtt{b}(x^{(\kappa)})x,\;\forall x^{(\kappa)} \geq 0, x\geq 0
\end{equation}
where $\mathtt{a}(x^{(\kappa)}) \triangleq \ln(1+x^{(\kappa)}) - \frac{x^{(\kappa)}}{1+x^{(\kappa)}},\
 \mathtt{b}(x^{(\kappa)}) \triangleq  \frac{1}{1+x^{(\kappa)}},$ which is a result of the concavity of the function $\ln(1+x)$. For the concave function $\sqrt{yz}$, its convex upper bound is \cite{Beck:JGO:10}
\begin{eqnarray}
\sqrt{yz} &\leq& \frac{\sqrt{y^{(\kappa)}}}{2\sqrt{z^{(\kappa)}}}z + \frac{\sqrt{z^{(\kappa)}}}{2\sqrt{y^{(\kappa)}}}y \label{B3}
\end{eqnarray}
with $\forall y > 0, y^{(\kappa)} > 0, z > 0, z^{(\kappa)} > 0$. To evaluate \eqref{eq:RateEvamChange:a} and \eqref{eq:RateEvamChange:c}, we first introduce the following lemma.
\begin{lemma}\label{lemma:1}
Assuming all Eves have the same channel properties and  are independent,  \eqref{eq:RateEvamChange:a} and \eqref{eq:RateEvamChange:c}  are respectively converted  into the following constraints:
\begin{IEEEeqnarray}{rCl}
\frac{\mathbf{w}_{i,k}^H\bar{\mathbf{H}}_{m}\mathbf{w}_{i,k}}{e^{\alpha_i\Gamma_{i,k}^{\mathtt{D}}}-1}  \leq \bar{\psi}_{m,i,k}(\mathbf{X}_i) + \bigl(1-\epsilon_{i,k}^{1/M}\bigr)N_{e,m}\sigma^2 \label{eq:OP7:c1}\quad
 \end{IEEEeqnarray}
and 
\begin{IEEEeqnarray}{rCl}
\frac{\rho_{i,\ell}^2\bar{g}_{m,i,\ell} }{e^{\alpha_i\Gamma_{i,\ell}^{\mathtt{U}}}-1} \leq \bar{\chi}_{m,i,\ell}(\mathbf{X}_i) +\bigl(1-\tilde{\epsilon}_{i,\ell}^{1/M}\bigl) N_{e,m}\sigma^2\label{eq:OP7:d1}\quad
 \end{IEEEeqnarray}
 where 
$\bar{\psi}_{m,i,k}(\mathbf{X}_i) \triangleq\sum\nolimits_{j=1,j\neq k}^K\mathbf{w}_{i,j}^H\bar{\mathbf{H}}_{m}\mathbf{w}_{i,j} +\tr\bigl(\mathbf{V}_{i}^H\bar{\mathbf{H}}_{m}\mathbf{V}_{i}\bigr) + \sum\nolimits_{\ell=1}^L\rho_{i,\ell}^2\bar{g}_{m,i,\ell}$ and $
\bar{\chi}_{m,i,\ell}(\mathbf{X}_i) \triangleq \sum\nolimits_{k=1}^K\mathbf{w}_{i,k}^H\bar{\mathbf{H}}_{m}\mathbf{w}_{i,k} +\tr\bigl(\mathbf{V}_{i}^H\bar{\mathbf{H}}_{m}\mathbf{V}_{i}\bigr) + \sum\nolimits_{j=1,j\neq\ell}^L\rho_{i,j}^2\bar{g}_{m,i,j}.$
\end{lemma}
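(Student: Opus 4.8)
\emph{Proof plan.} The idea is to first strip off the maximization over the $M$ eavesdroppers, and then turn each single-eavesdropper outage probability into a deterministic inequality by a first-moment (Markov) bound. Since the $M$ Eves are mutually independent with identical channel statistics, the events $\mathcal{E}_m\triangleq\{C_{m,i,k}^{\mathtt{ED}}(\mathbf{X}_i,\alpha_i)\leq\Gamma_{i,k}^{\mathtt{D}}\}$, $m\in\mathcal{M}$, are independent and equiprobable, so
\[
\Pro\Bigl(\underset{m\in\mathcal{M}}{\max}\,C_{m,i,k}^{\mathtt{ED}}(\mathbf{X}_i,\alpha_i)\leq\Gamma_{i,k}^{\mathtt{D}}\Bigr)=\Pro\Bigl(\bigcap_{m\in\mathcal{M}}\mathcal{E}_m\Bigr)=\prod_{m\in\mathcal{M}}\Pro(\mathcal{E}_m)=\bigl[\Pro(\mathcal{E}_m)\bigr]^{M},
\]
so \eqref{eq:RateEvamChange:a} is equivalent to the per-eavesdropper requirement $\Pro\bigl(C_{m,i,k}^{\mathtt{ED}}(\mathbf{X}_i,\alpha_i)>\Gamma_{i,k}^{\mathtt{D}}\bigr)\leq 1-\epsilon_{i,k}^{1/M}$, and likewise \eqref{eq:RateEvamChange:c} reduces to $\Pro\bigl(C_{m,i,\ell}^{\mathtt{EU}}(\mathbf{X}_i,\alpha_i)>\Gamma_{i,\ell}^{\mathtt{U}}\bigr)\leq 1-\tilde{\epsilon}_{i,\ell}^{1/M}$.

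Next I would recast a single-eavesdropper event in terms of a random variable with a computable mean. Inserting $C_{m,i,k}^{\mathtt{ED}}(\mathbf{X}_i,\alpha_i)=\tfrac{1}{\alpha_i}\ln\bigl(1+\|\mathbf{H}_m^H\mathbf{w}_{i,k}\|^2/\psi_{m,i,k}(\mathbf{X}_i)\bigr)$ and using $\psi_{m,i,k}(\mathbf{X}_i)>0$, the event $\{C_{m,i,k}^{\mathtt{ED}}>\Gamma_{i,k}^{\mathtt{D}}\}$ is exactly $\bigl\{\|\mathbf{H}_m^H\mathbf{w}_{i,k}\|^2-(e^{\alpha_i\Gamma_{i,k}^{\mathtt{D}}}-1)\bigl(\psi_{m,i,k}(\mathbf{X}_i)-N_{e,m}\sigma^2\bigr)>(e^{\alpha_i\Gamma_{i,k}^{\mathtt{D}}}-1)N_{e,m}\sigma^2\bigr\}$, i.e., with the deterministic noise term peeled off the interference. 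Applying Markov's inequality to the random variable on the left, and evaluating its mean term by term from the statistics in \eqref{eq:CDI} --- via $\mathbb{E}\{\mathbf{w}_{i,k}^H\mathbf{H}_m\mathbf{H}_m^H\mathbf{w}_{i,k}\}=\mathbf{w}_{i,k}^H\bar{\mathbf{H}}_m\mathbf{w}_{i,k}$, $\mathbb{E}\{\|\mathbf{H}_m^H\mathbf{V}_i\|_{\mathrm{F}}^2\}=\tr(\mathbf{V}_i^H\bar{\mathbf{H}}_m\mathbf{V}_i)$ and $\mathbb{E}\{\|\mathbf{g}_{m,i,\ell}^H\|^2\}=\bar{g}_{m,i,\ell}$ --- that mean equals $\mathbf{w}_{i,k}^H\bar{\mathbf{H}}_m\mathbf{w}_{i,k}-(e^{\alpha_i\Gamma_{i,k}^{\mathtt{D}}}-1)\bar{\psi}_{m,i,k}(\mathbf{X}_i)$, so it suffices to impose
\[
\frac{\mathbf{w}_{i,k}^H\bar{\mathbf{H}}_m\mathbf{w}_{i,k}-(e^{\alpha_i\Gamma_{i,k}^{\mathtt{D}}}-1)\bar{\psi}_{m,i,k}(\mathbf{X}_i)}{(e^{\alpha_i\Gamma_{i,k}^{\mathtt{D}}}-1)N_{e,m}\sigma^2}\leq 1-\epsilon_{i,k}^{1/M},
\]
which, after clearing denominators and dividing by $e^{\alpha_i\Gamma_{i,k}^{\mathtt{D}}}-1>0$, is exactly \eqref{eq:OP7:c1}. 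Running the same two steps for $C_{m,i,\ell}^{\mathtt{EU}}$ --- now with leaked signal power $\rho_{i,\ell}^2\|\mathbf{g}_{m,i,\ell}^H\|^2$ (mean $\rho_{i,\ell}^2\bar{g}_{m,i,\ell}$) replacing $\|\mathbf{H}_m^H\mathbf{w}_{i,k}\|^2$ and $\bar{\chi}_{m,i,\ell}$ replacing $\bar{\psi}_{m,i,k}$ --- yields \eqref{eq:OP7:d1}.

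The delicate point is the first-moment step: the random variable on the left of the rewritten event need not be nonnegative, since for some realizations of the wiretap channels the leaked signal power can fall below the scaled aggregate interference. Consequently the Markov-type bound should be understood as a conservative (``safe'') surrogate for the true outage probability, in line with \cite{Nguyen:TIFS:16}, so that \eqref{eq:OP7:c1}--\eqref{eq:OP7:d1} \emph{tighten} rather than exactly reproduce \eqref{eq:RateEvamChange:a}--\eqref{eq:RateEvamChange:c}; I would flag this caveat explicitly. Everything else is routine: the product rule for independent events, the monotone manipulation of the $\log$-SINR inequality, and linearity of expectation together with \eqref{eq:CDI}.
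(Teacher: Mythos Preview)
Your proposal is correct and follows essentially the same approach as the paper: independence of the Eves to reduce to a per-eavesdropper bound $\Pro(\mathcal{E}_m)\geq\epsilon_{i,k}^{1/M}$, complementation, rearrangement of the SINR event to isolate the deterministic noise term $(e^{\alpha_i\Gamma_{i,k}^{\mathtt{D}}}-1)N_{e,m}\sigma^2$ on the right, Markov's inequality, and termwise expectation via \eqref{eq:CDI}. Your explicit caveat that the random variable fed to Markov's inequality can be sign-indefinite---so that \eqref{eq:OP7:c1}--\eqref{eq:OP7:d1} are conservative surrogates rather than exact equivalents---is a point the paper applies silently but does not flag.
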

\begin{proof}
See the Appendix.
\end{proof}

We note that  constraint \eqref{eq:OP7:c1} is still non-convex but can be further shaped to take the equivalent  form:
\begin{IEEEeqnarray}{rCl}\label{eq:OP7:c3}
&&\mathbf{w}_{i,k}^H\bar{\mathbf{H}}_{m}\mathbf{w}_{i,k}/\beta_{i,k}^{\mathtt{D}}  \leq \bar{\psi}_{m,i,k}(\mathbf{X}_i) + \bigl(1-\epsilon_{i,k}^{1/M}\bigr)N_{e,m}\sigma^2, \IEEEyessubnumber\label{eq:OP7:c3a:nonconvex}\qquad\\
&& \beta_{i,k}^{\mathtt{D}} \leq e^{\alpha_i\Gamma_{i,k}^{\mathtt{D}}}-1 \Leftrightarrow \ln(1 + \beta_{i,k}^{\mathtt{D}})/\alpha_i \leq \Gamma_{i,k}^{\mathtt{D}} \IEEEyessubnumber\label{eq:OP7:c3b}\end{IEEEeqnarray}
where $\beta_{i,k}^{\mathtt{D}} > 0, \forall(i,k)\in\mathcal{D}$ are new variables. For \eqref{eq:OP7:c3a:nonconvex}, its LHS is a quadratic-over-affine function (which is convex) and the first term of the right hand-side (RHS) is a quadratic function. Then, we iteratively replace \eqref{eq:OP7:c3a:nonconvex} by
\begin{IEEEeqnarray}{rCl}\label{eq:OP7:c31}
\mathbf{w}_{i,k}^H\bar{\mathbf{H}}_{m}\mathbf{w}_{i,k}/\beta_{i,k}^{\mathtt{D}}  \leq \bar{\psi}_{m,i,k}^{(\kappa)}(\mathbf{X}_i) + \bigl(1-\epsilon_{i,k}^{1/M}\bigr)N_{e,m}\sigma^2 \qquad
\end{IEEEeqnarray}
 where $\bar{\psi}_{m,i,k}^{(\kappa)}(\mathbf{X}_i)$  $ \triangleq 2\bigl(\sum\nolimits_{j=1,j\neq k}^K\Re\{(\mathbf{w}_{i,j}^{(\kappa)})^H\bar{\mathbf{H}}_{m}\mathbf{w}_{i,j}\} +\Re\{\tr\bigl((\mathbf{V}_{i}^{(\kappa)})^H\bar{\mathbf{H}}_{m}\mathbf{V}_{i}\bigr)\}+\sum\nolimits_{\ell=1}^L\rho_{i,\ell}^{(\kappa)}\rho_{i,\ell}\bar{g}_{m,i,\ell}\bigl) - \bar{\psi}_{m,i,k}(\mathbf{X}_i^{(\kappa)})$ is the inner approximation of $\bar{\psi}_{m,i,k}(\mathbf{X}_i)$.   
 By using \eqref{ineupp},  \eqref{eq:OP7:c3b} holds that
\begin{IEEEeqnarray}{rCl}\label{eq:OP7:c3b1}
\mathtt{a}(\beta_{i,k}^{\mathtt{D},{(\kappa)}})/\alpha_i + \mathtt{b}(\beta_{i,k}^{\mathtt{D},{(\kappa)}})\beta_{i,k}^{\mathtt{D}}/\alpha_i \leq \Gamma_{i,k}^{\mathtt{D}}.
\end{IEEEeqnarray}
For $\mathcal{W}(\beta_{i,k}^{\mathtt{D}},\alpha_i)\triangleq\beta_{i,k}^{\mathtt{D}}/\alpha_i$, applying \eqref{B3} yields
\begin{IEEEeqnarray}{rCl}\label{eq:approW}
\mathcal{W}(\beta_{i,k}^{\mathtt{D}},\alpha_i) &\leq& \frac{1}{2}\Bigl(\frac{(\beta_{i,k}^{\mathtt{D}})^2}{\beta_{i,k}^{\mathtt{D},(\kappa)}}\frac{1}{\alpha_i^{(\kappa)}} + \beta_{i,k}^{\mathtt{D},(\kappa)}\frac{1}{2\alpha_i-\alpha_i^{(\kappa)}}\Bigr) \nonumber\\
&:=& \mathcal{W}^{(\kappa)}(\beta_{i,k}^{\mathtt{D}},\alpha_i)
\end{IEEEeqnarray}
where $\alpha_i^2$ is  linearized as $\alpha_i^{(\kappa)}(2\alpha_i-\alpha_i^{(\kappa)})$. As a result, the following inequality holds
\begin{IEEEeqnarray}{rCl}\label{eq:OP7:c3c1}
 \mathtt{a}(\beta_{i,k}^{\mathtt{D},{(\kappa)}})/\alpha_i + \mathtt{b}(\beta_{i,k}^{\mathtt{D},{(\kappa)}})\mathcal{W}^{(\kappa)}(\beta_{i,k}^{\mathtt{D}},\alpha_i) \leq \Gamma_{i,k}^{\mathtt{D}}, (i,k)\in\mathcal{D}\qquad
\end{IEEEeqnarray}
which is the convex approximation of \eqref{eq:OP7:c3b1}.

By following  steps \eqref{eq:OP7:c3}-\eqref{eq:OP7:c3c1}, we
 equivalently decompose \eqref{eq:OP7:d1} into  the following set of convex constraints:
\begin{IEEEeqnarray}{rCl}\label{eq:OP7:d3}
&&\rho_{i,\ell}^2\bar{g}_{m,i,\ell}/\beta_{i,\ell}^{\mathtt{U}}  \leq \bar{\chi}^{(\kappa)}_{m,i,\ell}(\mathbf{X}_i) + \bigl(1-\tilde{\epsilon}_{i,\ell}^{1/M}\bigr)N_{e,m}\sigma^2, \nonumber\\
&&\qquad\qquad \qquad \qquad \qquad \qquad\qquad\  \forall m\in\mathcal{M}, (i,\ell)\in\mathcal{U}, \IEEEyessubnumber\label{eq:OP7:d3a}\quad\\
&& \mathtt{a}(\beta_{i,\ell}^{\mathtt{U},{(\kappa)}})/\alpha_i + \mathtt{b}(\beta_{i,\ell}^{\mathtt{U},{(\kappa)}})\mathcal{W}^{(\kappa)}(\beta_{i,\ell}^{\mathtt{U}},\alpha_i) \leq \Gamma_{i,\ell}^{\mathtt{U}}, (i,\ell)\in\mathcal{U}\qquad\IEEEyessubnumber\label{eq:OP7:d3c}\  \end{IEEEeqnarray}
where $\beta_{i,\ell}^{\mathtt{U}}>0, \forall(i,\ell)\in\mathcal{U}$ are new variables,  $\bar{\chi}^{(\kappa)}_{m,i,\ell}(\mathbf{X}_i) \triangleq 2\bigl(\sum\nolimits_{k=1}^K\Re\{(\mathbf{w}_{i,k}^{(\kappa)})^H\bar{\mathbf{H}}_{m}\mathbf{w}_{i,k}\} +\Re\{\tr\bigl((\mathbf{V}_{i}^{(\kappa)})^H\bar{\mathbf{H}}_{m}\mathbf{V}_{i}\bigr)\}+\sum\nolimits_{j=1,j\neq\ell}^L\rho_{i,j}^{(\kappa)}\rho_{i,j}\bar{g}_{m,i,j}\bigl)  - \bar{\chi}_{m,i,\ell}(\mathbf{X}_i^{(\kappa)})$ is the inner approximation of $\bar{\chi}_{m,i,\ell}(\mathbf{X}_i)$.

\textit{Inner Approximation of Power Constraints \eqref{eq:OP4:c} and \eqref{eq:OP4:d}:} By applying \cite[Eq. (21)]{Nguyen:TCOM:17}, the  inner convex approximations for the non-convex constraints \eqref{eq:OP4:c} and \eqref{eq:OP4:d} are given as
\begin{IEEEeqnarray}{rCl}\label{eq:PowerAppro}
 \sum\nolimits_{k=1}^K\|\mathbf{w}_{1,k}\|^2 + \|\mathbf{V}_1\|_{\mathrm{F}}^2 + \frac{1}{\alpha_2}\Bigl(\sum\nolimits_{k=1}^{K}\|\mathbf{w}_{2,k}\|^2 + \|\mathbf{V}_2\|_{\mathrm{F}}^2\Bigr) \nonumber\\
 - \frac{2}{\alpha_2^{(\kappa)}}\Bigr(\sum\nolimits_{k=1}^K\Re\bigl\{(\mathbf{w}_{1,k}^{(\kappa)})^H\mathbf{w}_{1,k}\bigr\}+ \Re\left\{\tr\bigl((\mathbf{V}_1^{(\kappa)})^H\mathbf{V}_1\bigr)\right\}\Bigl)  \nonumber\\
+ \Bigl(\sum\nolimits_{k=1}^K\|\mathbf{w}_{1,k}^{(\kappa)}\|^2+ \|\mathbf{V}_1^{(\kappa)}\|_{\mathrm{F}}^2\Bigr)\frac{\alpha_2}{(\alpha_2^{(\kappa)})^2}\leq  P_{bs}^{\max} ,   \qquad\ \IEEEyessubnumber\label{eq:OP4convex:c}\\
 \rho_{1,\ell}^2 - \frac{2\rho_{1,\ell}^{(\kappa)}}{\alpha_2^{(\kappa)}}\rho_{1,\ell} + \frac{(\rho_{1,\ell}^{(\kappa)})^2}{(\alpha_2^{(\kappa)})^2}\alpha_2 \leq P_{1,\ell}^{\max},\;\forall \ell\in\mathcal{L}. \qquad\ \IEEEyessubnumber\label{eq:OP4convex:e}
\end{IEEEeqnarray}

In summary,  the following convex program, which is an inner approximation of \eqref{eq:OP4}, is solved at the $\kappa$-th iteration:
\begin{IEEEeqnarray}{rCl}\label{eq:OP8}
\underset{\mathbf{X},\eta,\boldsymbol{\Gamma},\boldsymbol{\alpha},\boldsymbol{\beta}}{\maxi}&&\quad \eta \IEEEyessubnumber\label{eq:OP8:a}\\
\st\;&& \eqref{eq:OP1:e}, \eqref{eq:changevariables:b},  \eqref{eq:OP4:e}, \eqref{eq:poscondi}, \eqref{eq:R1ktrust}, \eqref{eq:R1kConvex},  \nonumber\\ 
  && \eqref{eq:R1qConvexappro}, \eqref{eq:OP7:c31}, \eqref{eq:OP7:c3c1}, \eqref{eq:OP7:d3}, \eqref{eq:PowerAppro},\IEEEyessubnumber\label{eq:OP8:b}\\
&& \beta_{i,k}^{\mathtt{D}} > 0,  \beta_{i,\ell}^{\mathtt{U}} > 0,\  \forall(i,k)\in\mathcal{D}, (i,\ell)\in\mathcal{U}  \IEEEyessubnumber\label{eq:OP8:c}\qquad
\end{IEEEeqnarray}
to generate the next feasible point $(\mathbf{X}^{(\kappa+1)},\boldsymbol{\alpha}^{(\kappa+1)},\boldsymbol{\beta}^{(\kappa+1)})$, where $\boldsymbol{\beta}\triangleq\{\beta_{i,k}^{\mathtt{D}},  \beta_{i,\ell}^{\mathtt{U}}\}_{i\in\mathcal{I},k\in\mathcal{K},\ell\in\mathcal{L}}$. The proposed Algorithm \ref{algo:unknownCSI} outlines the steps to solve \eqref{eq:OP4}. This algorithm yields a non-decreasing sequence of  objective values, i.e., $\eta^{(\kappa+1)} \geq \eta^{(\kappa)}$ that is provable convergent  since the convex approximations satisfy the properties listed in \cite{Marks:78}.
\begin{algorithm}[t]
\begin{algorithmic}[1]
\protect\caption{Proposed path-following  algorithm to solve   \eqref{eq:OP4} }
\label{algo:unknownCSI}
\global\long\def\algorithmicrequire{\textbf{Initialization:}}
\REQUIRE  Set $\kappa:=0$ and solve \eqref{eq:OP5feasbile} to generate an initial feasible point $(\mathbf{X}^{(0)},\boldsymbol{\alpha}^{(0)},\boldsymbol{\beta}^{(0)})$.
\REPEAT
\STATE Solve \eqref{eq:OP8} with $(\mathbf{X}^{(\kappa)},\boldsymbol{\alpha}^{(\kappa)},\boldsymbol{\beta}^{(\kappa)})$ to obtain the optimal solution ($\mathbf{X}^{\star},\eta^{\star},\boldsymbol{\Gamma}^{\star},\boldsymbol{\alpha}^{\star}, \boldsymbol{\beta}^{\star}$).
\STATE Update $\mathbf{X}^{(\kappa+1)}:=\mathbf{X}^{\star},\boldsymbol{\alpha}^{(\kappa+1)}:=\boldsymbol{\alpha}^{\star},\boldsymbol{\beta}^{(\kappa+1)}:=\boldsymbol{\beta}^{\star}$.
\STATE Set $\kappa:=\kappa+1.$
\UNTIL Convergence\\
\end{algorithmic} \end{algorithm}   

\textit{Generation of the initial points:}
Initialized by any feasible $(\mathbf{X}^{(0)},\boldsymbol{\alpha}^{(0)})$ to the convex constraints \{\eqref{eq:OP1:e}, \eqref{eq:changevariables:b},  \eqref{eq:OP4:e}, \eqref{eq:poscondi}, \eqref{eq:R1ktrust}, \eqref{eq:R1kConvex}, \eqref{eq:R1qConvexappro}, \eqref{eq:PowerAppro}\}, the following convex program
\begin{IEEEeqnarray}{rCl}\label{eq:OP5feasbile}
\underset{\mathbf{X},\eta,\boldsymbol{\Gamma},\boldsymbol{\alpha}}{\maxi}&&\;\{\eta - \bar{\eta}_{\min}\} \IEEEyessubnumber\label{eq:OP5feasbile:a}\\
\st\;&& \eqref{eq:OP1:e}, \eqref{eq:changevariables:b},  \eqref{eq:OP4:e}, \eqref{eq:poscondi}, \eqref{eq:R1ktrust}, \eqref{eq:R1kConvex}, \eqref{eq:R1qConvexappro}, \eqref{eq:PowerAppro}\IEEEyessubnumber\label{eq:OP5feasbile:b}\quad\
\end{IEEEeqnarray}
without imposing Eves' constraints, is successively solved until reaching: $\{\eta - \bar{\eta}_{\min}\} \geq 0$. Herein, $\bar{\eta}_{\min} > 0$ is a given value to further improve the convergence speed of solving \eqref{eq:OP4}. The initial feasible $\boldsymbol{\beta}^{(0)}$ is then found by setting the inequalities in \eqref{eq:OP7:c3} and \eqref{eq:OP7:d3} to equalities.

\vspace{-0.3cm}
\section{Numerical Results}\label{NumericalResults}
 
\begin{table}[t]
\caption{Simulation Parameters}
	\label{parameter}
	\centering
	{\setlength{\tabcolsep}{0.2em}
\setlength{\extrarowheight}{0.1em}
		\begin{tabular}{l|l}
		\hline
				Parameter & Value \\
		\hline\hline
		    Carrier center  frequency/ System bandwidth                            & 2 GHz/ 10 MHz \\
				Distance between the FD-BS and  nearest user  & $\geq$ 10 m\\
				Noise power spectral density at the receivers & -174 dBm/Hz \\
				Path loss model for LOS,  $\mathrm{PL}_{\mathtt{LOS}}$ & 103.8 + 20.9$\log_{10}(d)$ dB\\
				Path loss model for NLOS,  $\mathrm{PL}_{\mathtt{NLOS}}$& 145.4 + 37.5$\log_{10}(d)$ dB\\
				Power budget at the FD-BS, $P_{bs}^{\max}$     & 26 dBm\\
				Power budget at  UL users, $P_{i,\ell}^{\max} = P_{\mathtt{U}}^{\max}$     & 23 dBm \\
			  FD residual SI, $\sigma_{\mathtt{SI}}$ & -75 dBm\\
				Number of antennas at  the FD-BS, $N_{\mathtt{t}} = N_{\mathtt{r}}$ & 5\\
		\hline		   				
		\end{tabular}}
\end{table}

 A small cell topology with 4 DL users $(K=2)$, 4 UL users $(L=2)$ and $M=2$ Eves is used in the numerical examples. The radius of the small cell is set to  100 m with inner circle radius of 50 m. 2 DL users and 2 UL users are randomly located in zone-1  and the remaining 2 DL users and 2 UL users are randomly
located in  zone-2. An Eve with $N_{e,m} = 2$ antennas is randomly placed in each zone.  Unless stated otherwise,  the parameters regarding the FD transmission take the values  provided in Table~\ref{parameter}, which follow  the 3GPP  specifications \cite{DUPLO}.  The entries of the fading loop channel $\mathbf{G}_{\mathtt{SI}}$ are  generated as independent and identically distributed Rician random variables with the Rician factor $K_{{\mathtt{SI}}}=5$ dB.
The CCI channel coefficient at a distance $d$  (in km) is assumed to undergo the path loss (PL) model for non-line-of-sight (NLOS) communications as $f_{i, k, \ell} = \sqrt{10^{-\mathrm{PL}_{\mathtt{NLOS}}/10}}\tilde{f}_{i, k,\ell}$, where $\mathrm{PL}_{\mathtt{NLOS}}$ is the PL in dB and $\tilde{f}_{i,\ell k}$ follows $\mathcal{CN}(0,1)$. All other channels follow the PL model for line-of-sight (LOS) communications as $\mathbf{L} = \sqrt{10^{-\mathrm{PL}_{\mathtt{LOS}}/10}}\tilde{\mathbf{L}}$, where $\mathbf{L}\in\{\mathbf{h}_{i,k}, \mathbf{g}_{i,\ell},\mathbf{H}_m,\mathbf{g}_{m,i,\ell}\}$ and the entries of $\tilde{\mathbf{L}}$ follow $\mathcal{CN}(0,1)$.
For comparison,  we consider three existing schemes:
 $(i)$ ``Conventional FD'':  under which all DL and UL users are simultaneously served during the entire communication time block (i.e., without considering fractional times and user grouping \cite{SunTWC16}); $(ii)$ ``FD non-orthogonal multiple access (FD-NOMA)'': under the same system model with ``Conventional FD,'' the DL transmission can adopt  NOMA \cite{Nguyen:JSAC:17} to further improve its  performance; $(iii)$	``HD'':   an HD system is considered where the HD-BS uses all antennas $N = N_{\mathtt{t}}+N_{\mathtt{r}}$ to serve all DL  and  UL users, albeit in two separate  communication time blocks. In such a case,  there is no SI and CCI; however,  the effective SR suffers from a reduction by  half.

\begin{figure}
    \begin{center}
				\begin{subfigure}[Average max-min SR versus $P_{bs}^{\max}$.]{
        \includegraphics[width=0.40\textwidth]{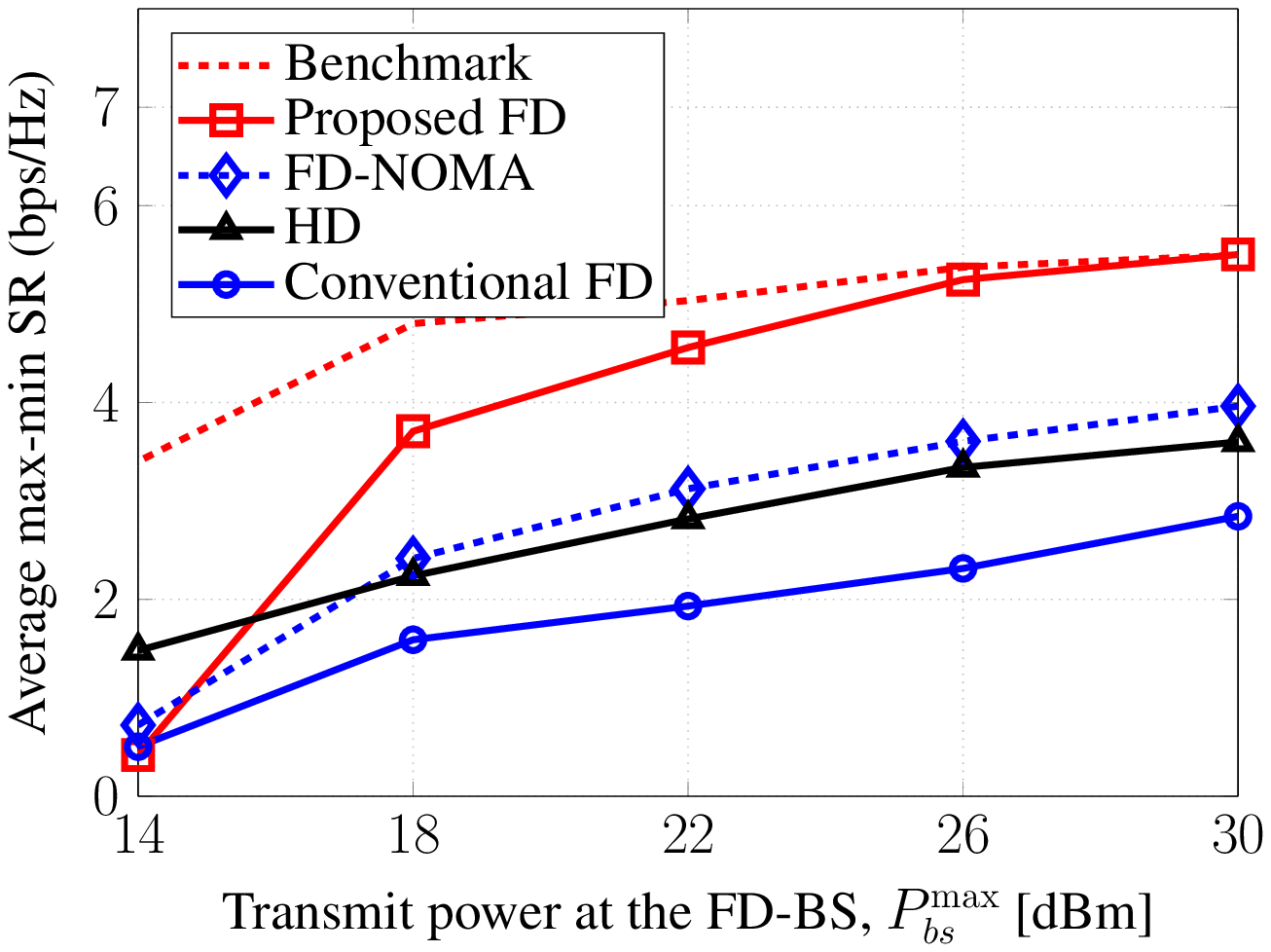}}
    		\label{fig:SCSIvsPbs:a}
				\end{subfigure}
				\begin{subfigure}[Average max-min SR of DL users versus $P_{bs}^{\max}$ for $\bar{\mathtt{R}}^{\mathtt{U}} = 2$ bps/Hz.]{
        \includegraphics[width=0.40\textwidth]{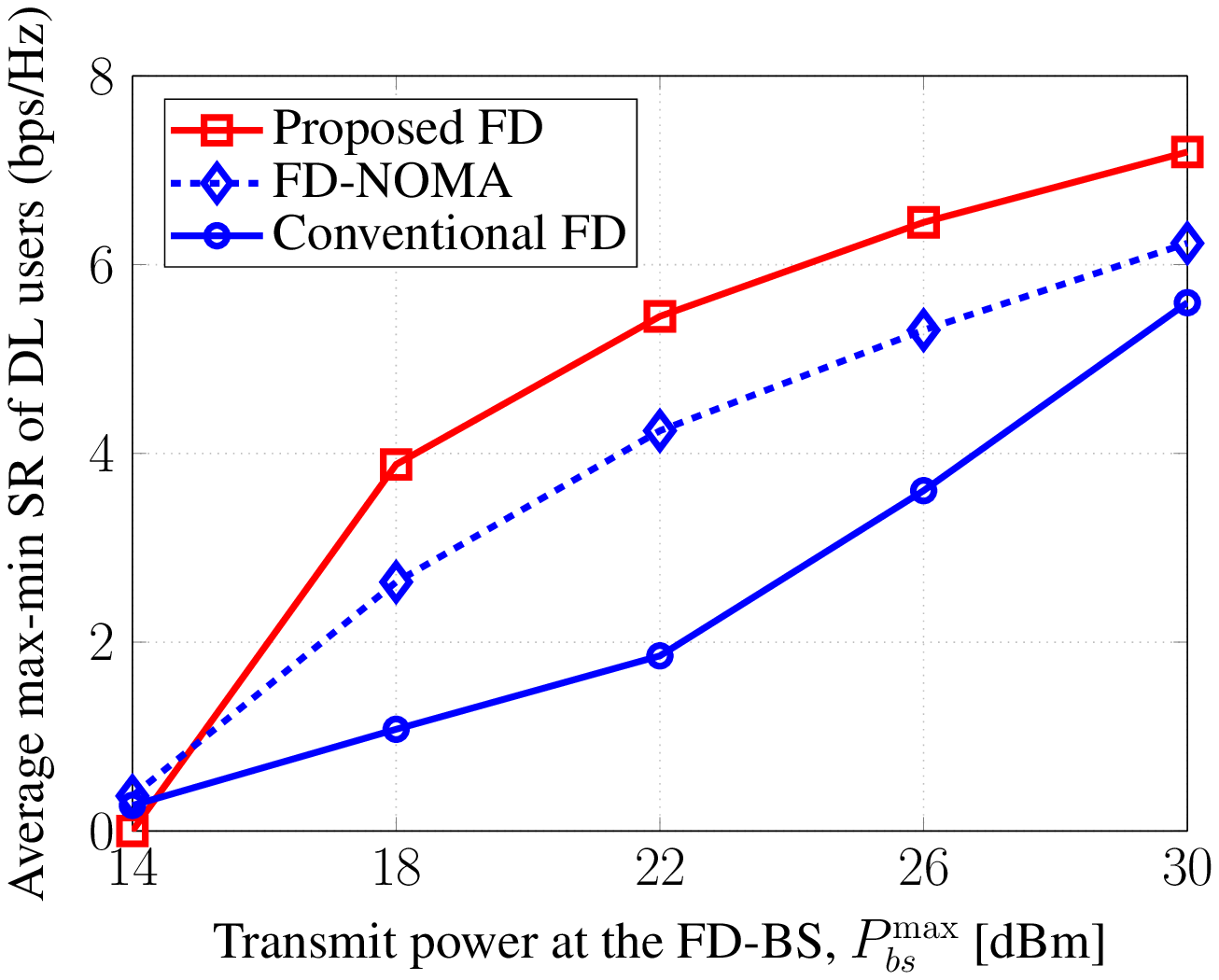}}
    		\label{fig:SCSIvsPbs:b}
				\end{subfigure}
	  \caption{(a) Average max-min SR per-user and (b) average max-min SR of DL users for $\bar{\mathtt{R}}^{\mathtt{U}} = 2$ bps/Hz, versus the transmit power at the FD-BS.}\label{fig:SCSIvsPbs}
\end{center}
\end{figure}

Fig.~\ref{fig:SCSIvsPbs}(a) depicts the average max-min SR versus the FD-BS transmit power for different resource allocation schemes.
We set $\epsilon_{i,k}  = 0.99, \forall (i,k)\in\mathcal{D}$ and $\tilde{\epsilon}_{i,\ell} = 0.99, \forall (i,\ell)\in\mathcal{U}$ to guarantee secure communications in both directions. In Fig.~\ref{fig:SCSIvsPbs}(a), we also plot a benchmark  of the proposed FD scheme, assuming  perfect CSI for the Eves.  As seen, the SRs of the proposed FD and  FD-NOMA schemes outperform the others due to the efficient proposed design and SIC, respectively.  The SR of the proposed FD scheme also approaches  that of the benchmark  when $P_{bs}^{\max}$ increases. This is because the proposed FD scheme aims to manage the network interference to improve the SR rather than concentrating the interference at Eves. At $P_{bs}^{\max} = 26$ dBm, significant gains of up to 126.8$\%$, 57.1$\%$ and 45.5$\%$ are offered by the proposed FD scheme compared to  conventional FD,  HD and  FD-NOMA, respectively. These results confirm that the proposed FD scheme is more robust and reliable in the presence of partially known Eves' CSI  compared to the others.

In a practical scenario, the DL and UL traffic demands in current generation wireless networks are typically asymmetric. Thus, we consider the following optimization problem
\begin{IEEEeqnarray}{rCl}\label{eq:OP1a}
\underset{\mathbf{X},\boldsymbol{\tau}}{\maxi}&&\;\underset{(i,k)\in\mathcal{D}}{\mini}\;\left\{R_{i,k}^{\mathtt{D}}(\mathbf{X}_i,\tau_i)\right\}, \st\  \eqref{eq:OP1:b}-\eqref{eq:OP1:f},\IEEEyessubnumber\label{eq:OP1a:c}\qquad\\
&&  R_{i,\ell}^{\mathtt{U}}(\mathbf{X}_i,\tau_i) \geq \bar{\mathtt{R}}_{i,\ell}^{\mathtt{U}},  \forall(i,\ell)\in\mathcal{U} \IEEEyessubnumber\label{eq:OP1a:b}
\end{IEEEeqnarray}
where the QoS constraints in \eqref{eq:OP1a:b} set a minimum SR requirement $\bar{\mathtt{R}}_{i,\ell}^{\mathtt{U}}$ at  UL user $(i,\ell)$.  The systematic approach in this paper is expected to be applicable for \eqref{eq:OP1a}.
The average max-min SR of the DL users versus the FD-BS transmit power  is given in Fig.~\ref{fig:SCSIvsPbs}(b) for $\bar{\mathtt{R}}_{i,\ell}^{\mathtt{U}}\equiv \bar{\mathtt{R}}^{\mathtt{U}} = 2$ bps/Hz.  The system performance of HD is not shown here due to the independence of DL and UL transmissions. As can be observed, the SRs of all schemes grow very rapidly when $P_{bs}^{\max}$  increases. The reasons behind this behavior are as follows: 1) The UL users can easily tune the power  in meeting their
 QoS requirements  to avoid strong CCI to the DL users; 2) The FD-BS will pay more attention to serve the DL users by transferring more power to them once  the UL users' QoS requirements are satisfied. Again, the proposed FD scheme achieves much better SR compared to the traditional FD schemes.

\vspace{-0.3cm}
\section{Conclusion}\label{Conclusion}
\vspace{-0.1cm}
We have addressed the problem of secure FD multiuser wireless communication. To handle the unwanted interference (SI, CCI and MUI), a simple and very efficient user grouping-based fractional time model has been proposed. We have developed a new path-following
optimization algorithm to jointly design the fractional times and power resource allocation to maximize the secrecy rate per user in both DL and UL directions. Numerical results with realistic parameters have revealed that the proposed FD scheme not only provides substantial  improvement in terms of secrecy rate over the existing schemes, but also confirms its robustness to the case when only  partial knowledge of Eves' CSI is known.

\vspace{-0.25cm}
\section*{Appendix:  Proof of Lemma \ref{lemma:1} }
Under the assumption of the independence of Eves' channels,  constraint \eqref{eq:RateEvamChange:a} can be computed as
\begin{IEEEeqnarray}{rCl}\label{eq:E1}
 \eqref{eq:RateEvamChange:a} \Leftrightarrow \Pro\bigl( C_{m,i,k}^{\mathtt{ED}}(\mathbf{X}_i,\alpha_i)\leq \Gamma_{i,k}^{\mathtt{D}}\bigr) \geq \epsilon_{i,k}^{1/M}.
\end{IEEEeqnarray}
Note that the inequality \eqref{eq:E1} holds  easier if   Eves' channels are dependent since its RHS yields a smaller value. We further rewrite \eqref{eq:E1} based on the basic property of probability as
\begin{equation}\label{eq:E2}
 \eqref{eq:E1} \Leftrightarrow \Pro\bigl( C_{m,i,k}^{\mathtt{ED}}(\mathbf{X}_i,\alpha_i)\geq\Gamma_{i,k}^{\mathtt{D}}\bigr) \leq 1-\epsilon_{i,k}^{1/M}.
\end{equation}
It requires an upper bound of the LHS of \eqref{eq:E2}, which is the outage probability for DL user ($i,k$). We make use of the  Markov inequality,  i.e., $\Pro(Y \geq y) \leq \mathbb{E}\{Y\}/y$ \cite{Billingsleybook}, to compute the LHS of \eqref{eq:E2} as
\begin{IEEEeqnarray}{rCl}\label{eq:E3}
 && \Pro\bigl(\|\mathbf{H}_{m}^H\mathbf{w}_{i,k}\|^2  + \bigl(1-e^{\alpha_i\Gamma_{i,k}^{\mathtt{D}}}\bigr)\psi'_{m,i,k}(\mathbf{X}_i) \nonumber\\
&&\qquad\qquad\qquad\qquad\qquad\;  \geq  \bigl(e^{\alpha_i\Gamma_{i,k}^{\mathtt{D}}}-1\bigr) N_{e,m}\sigma^2  \bigr)\\
&&\quad\leq\frac{\mathbb{E}\bigl\{\mathbf{w}_{i,k}^H\mathbf{H}_{m}\mathbf{H}_{m}^H\mathbf{w}_{i,k}+ \bigl(1-e^{\alpha_i\Gamma_{i,k}^{\mathtt{D}}}\bigr)\psi'_{m,i,k}(\mathbf{X}_i)\bigr\}}{\bigl(e^{\alpha_i\Gamma_{i,k}^{\mathtt{D}}}-1\bigr) N_{e,m}\sigma^2}\qquad\\
&&\quad = \frac{\mathbf{w}_{i,k}^H\bar{\mathbf{H}}_{m}\mathbf{w}_{i,k}+ \bigl(1-e^{\alpha_i\Gamma_{i,k}^{\mathtt{D}}}\bigr)\bar{\psi}_{m,i,k}(\mathbf{X}_i)}{\bigl(e^{\alpha_i\Gamma_{i,k}^{\mathtt{D}}}-1\bigr) N_{e,m}\sigma^2}\label{eq:E6}
\end{IEEEeqnarray}
where $\psi_{m,i,k}'(\mathbf{X}_i) = \psi_{m,i,k}(\mathbf{X}_i) - N_{e,m}\sigma^2$ and $\bar{\psi}_{m,i,k}(\mathbf{X}_i)$ is obtained by talking the expectation operations on each individual terms of $\psi_{m,i,k}'(\mathbf{X}_i)$.  By replacing  the LHS of \eqref{eq:E2} with \eqref{eq:E6} and  after some straightforward manipulations, we arrive at \eqref{eq:OP7:c1}. It can be shown in a similar manner that \eqref{eq:RateEvamChange:c}  is converted to \eqref{eq:OP7:d1}, and thus the proof is completed.

\vspace{-0.4cm}
\bibliographystyle{IEEEtran}
\bibliography{IEEEfull}

\begin{thebibliography}{10}
\providecommand{\url}[1]{#1}
\csname url@samestyle\endcsname
\providecommand{\newblock}{\relax}
\providecommand{\bibinfo}[2]{#2}
\providecommand{\BIBentrySTDinterwordspacing}{\spaceskip=0pt\relax}
\providecommand{\BIBentryALTinterwordstretchfactor}{4}
\providecommand{\BIBentryALTinterwordspacing}{\spaceskip=\fontdimen2\font plus
\BIBentryALTinterwordstretchfactor\fontdimen3\font minus
  \fontdimen4\font\relax}
\providecommand{\BIBforeignlanguage}[2]{{%
\expandafter\ifx\csname l@#1\endcsname\relax
\typeout{** WARNING: IEEEtran.bst: No hyphenation pattern has been}%
\typeout{** loaded for the language `#1'. Using the pattern for}%
\typeout{** the default language instead.}%
\else
\language=\csname l@#1\endcsname
\fi
#2}}
\providecommand{\BIBdecl}{\relax}
\BIBdecl

\bibitem{ZhangCM15}
Z.~Zhang, X.~Chai, K.~Long, A.~V. Vasilakos, and L.~Hanzo, ``Full duplex
  techniques for {5G} networks: {S}elf-interference cancellation, protocol
  design, and relay selection,'' \emph{IEEE Commun. Mag.}, vol.~53, no.~5, pp.
  128--137, May 2015.

\bibitem{YadavAcess17}
A.~Yadav, O.~A. Dobre, and N.~Ansari, ``Energy and traffic aware full-duplex
  communications for {5G} systems,'' \emph{IEEE Access}, vol.~5, pp.
  11\,278--11\,290, May 2017.

\bibitem{DUPLO}
\BIBentryALTinterwordspacing
``System scenarios and technical requirements for full-duplex concept,''
  \emph{DUPLO Project, Deliverable D1.1}. [Online]. Available: \url{at
  http://www.fp7-duplo.eu/index.php/deliverables.}
\BIBentrySTDinterwordspacing

\bibitem{ChenCST16}
X.~Chen, D.~W.~K. Ng, W.~Gerstacker, and H.~H. Chen, ``A survey on
  multiple-antenna techniques for physical layer security,'' \emph{IEEE Commun.
  Surveys Tutorials}, vol.~19, no.~2, pp. 1027--1053, 2nd Quarter 2017.

\bibitem{Nguyen:TIFS:16}
V.-D. Nguyen, T.~Q. Duong, O.~A. Dobre, and O.-S. Shin, ``Joint information and
  jamming beamforming for secrecy rate maximization in cognitive radio
  networks,'' \emph{IEEE Trans. Inform. Forensics $\&$ Security}, vol.~11,
  no.~11, pp. 2609--2623, Nov. 2016.

\bibitem{ZhuTSP14}
F.~Zhu, F.~Gao, M.~Yao, and H.~Zou, ``Joint information- and
  jamming-beamforming for physical layer security with full duplex base
  station,'' \emph{IEEE Trans. Signal Process.}, vol.~62, no.~24, pp.
  6391--6401, Dec. 2014.

\bibitem{ZhuTWC16}
F.~Zhu, F.~Gao, T.~Zhang, K.~Sun, and M.~Yao, ``Physical-layer security for
  full duplex communications with self-interference mitigation,'' \emph{IEEE
  Trans. Wireless Commun.}, vol.~15, no.~1, pp. 329--340, Jan. 2016.

\bibitem{SunTWC16}
Y.~Sun, D.~W.~K. Ng, J.~Zhu, and R.~Schober, ``Multi-objective optimization for
  robust power efficient and secure full-duplex wireless communication
  systems,'' \emph{IEEE Trans. Wireless Commun.}, vol.~15, no.~8, pp.
  5511--5526, Aug. 2016.

\bibitem{Nguyen:JSAC:17}
V.-D. Nguyen, H.~D. Tuan, T.~Q. Duong, H.~V. Poor, and O.-S. Shin, ``Precoder
  design for signal superposition in {MIMO-NOMA} multicell networks,''
  \emph{IEEE J. Select. Areas Commun.}, to appear, 2017.

\bibitem{Nguyen:TCOM:17}
V.-D. Nguyen, T.~Q. Duong, H.~D. Tuan, O.-S. Shin, and H.~V. Poor, ``Spectral
  and energy efficiencies in full-duplex wireless information and power
  transfer,'' \emph{IEEE Trans. Commun.}, vol.~65, no.~5, pp. 2220--2233, May
  2017.

\bibitem{Tse:book:05}
D.~Tse and P.~Viswanath, \emph{Fundamentals of Wireless Communication.}\hskip
  1em plus 0.5em minus 0.4em\relax Cambridge Univ. Press, UK, 2005.

\bibitem{Nguyen:Access:17}
V.-D. Nguyen, H.~V. Nguyen, C.~T. Nguyen, and O.-S. Shin, ``Spectral efficiency
  of full-duplex multiuser system: Beamforming design, user grouping, and time
  allocation,'' \emph{IEEE Access}, vol.~5, pp. 5785--5797, Mar. 2017.

\bibitem{AkgunTCOM17}
B.~Akgun, O.~O. Koyluoglu, and M.~Krunz, ``Exploiting full-duplex receivers for
  achieving secret communications in multiuser {MISO} networks,'' \emph{IEEE
  Trans. Commun.}, vol.~65, no.~2, pp. 956--968, Feb. 2017.

\bibitem{Marks:78}
B.~R. Marks and G.~P. Wright, ``A general inner approximation algorithm for
  nonconvex mathematical programs,'' \emph{Operations Research}, vol.~26,
  no.~4, pp. 681--683, July-Aug. 1978.

\bibitem{WES06}
A.~Wiesel, Y.~Eldar, and S.~Shamai, ``Linear precoding via conic optimization
  for fixed {MIMO} receivers,'' \emph{IEEE Trans. Signal Process.}, vol.~54,
  no.~1, pp. 161--176, Jan. 2006.

\bibitem{Beck:JGO:10}
A.~Beck, A.~Ben-Tal, and L.~Tetruashvili, ``A sequential parametric convex
  approximation method with applications to nonconvex truss topology design
  problems,'' \emph{J. Global Optim.}, vol.~47, no.~1, pp. 29--51, May 2010.

\bibitem{Billingsleybook}
P.~Billingsley, \emph{Probability and Measure}.\hskip 1em plus 0.5em minus
  0.4em\relax 3rd ed. New York: Wiley, 1995.

\end{thebibliography}
\vspace{-0.9cm}

\end{document}